\documentclass[envcountsame]{llncs}

\usepackage[utf8]{inputenc}
\usepackage{amsmath}
\usepackage{amssymb}
\usepackage{rotating}
\usepackage{amsfonts}
\usepackage{hyperref}
\usepackage{caption}
\usepackage{subcaption}
\usepackage{stmaryrd}
\usepackage{tikz}
\usepackage{enumerate}
\pagestyle{plain}

\newcommand{\Defeq}{\stackrel{\mathrm{def}}{=}}

\newcommand{\dtrans}[2]{\lower.2em\hbox{$\xrightarrow{#1/#2}$}}
\newcommand{\dTrans}[2]{\lower.2em\hbox{$\xRightarrow{#1/#2}$}}
\newcommand{\epstrans}[2]{\lower.2em\hbox{$\xrightarrow{\epsilon_{#1,#2}}$}}

\newcommand{\pre}[1]{{^\circ{#1}}}
\newcommand{\post}[1]{{#1^\circ}}

\newcommand{\source}[1]{{^\bullet{#1}}}
\newcommand{\target}[1]{{#1^\bullet}}

\newcommand{\marking}[2]{{[#1]}_{#2}}

\newcommand{\semanticsOf}[1]{\llbracket{#1}\rrbracket}

\newcommand{\figref}[1]{Fig.~\ref{#1}}
\newcommand{\exmref}[1]{Example~\ref{#1}}
\newcommand{\secref}[1]{Sec.~\ref{#1}}

\makeatletter
\def\moverlay{\mathpalette\mov@rlay}
\def\mov@rlay#1#2{\leavevmode\vtop{%
\baselineskip\z@skip \lineskiplimit-\maxdimen
\ialign{\hfil$#1##$\hfil\cr#2\crcr}}}
\makeatother

\newcommand{\ten}{\otimes}
\newcommand{\union}{\mathrel{\cup}}
\newcommand{\intersection}{\mathrel{\cap}}

\newcommand{\diag}{\ensuremath{\mathsf{\Delta}}}

\newcommand{\comp}{\mathrel{;}}
\newcommand{\ldiag}{\ensuremath{\mathsf{\Lambda}}}

\newcommand{\rightEnd}{\ensuremath{\pmb{\bot}}}

\newcommand{\lzero}{\ensuremath{\,\pmb{\uparrow}\,}}
\newcommand{\rzero}{\ensuremath{\,\pmb{\downarrow}\,}}

\newcommand{\rring}[1]{\ensuremath{\mathbb{#1}}}
\newcommand{\N}{\rring{N}}

\newcommand{\setof}[1]{\left\lbrace\,#1\,\right\rbrace}

\newcommand{\bnfSep}{\ |\ }
\newcommand{\bnfEq}{\ ::=\ }

\newcommand{\ordinal}[1]{{[#1]}}

\newcommand{\places}[1]{\mathsf{places}(#1)}
\newcommand{\trans}[1]{\mathsf{trans}(#1)}
\newcommand{\contention}{\ensuremath{\bowtie}}
\newcommand{\fat}[1]{\mathbf{#1}}

\newcommand{\wiringVarEnv}{\mathcal{V}}
\newcommand{\wiringTree}{\ensuremath{t}}
\newcommand{\wiringDecomp}{\ensuremath{\left(\wiringTree,\,\wiringVarEnv\right)}}

\newcommand{\portset}[1]{\left\langle\,#1\,\right\rangle}
\newcommand{\footprint}[1]{\left[\,#1\,\right]}

\newcommand{\outport}[1]{{#1_\blacktriangleright}}
\newcommand{\inport}[1]{{#1_\blacktriangleleft}}
\newcommand{\leftb}[1]{{#1_{\mathsf{L}}}}
\newcommand{\rightb}[1]{{#1_{\mathsf{R}}}}

\newcommand{\minsync}[2]{\mathsf{Synch}(#1,#2)}

\newcommand{\ports}[1]{\mathsf{ports}(#1)}
\newcommand{\conn}[1]{\mathsf{conn}(#1)}
\newcommand{\connwithnet}[2]{\mathsf{conn}^{#1}(#2)}
\newcommand{\eports}[1]{\mathsf{eports}(#1)}
\newcommand{\connr}[2]{\mathsf{conn}_{#1}(#2)}
\newcommand{\bconn}[2]{\mathsf{bconn}_{#1}(#2)}
\newcommand{\network}[2]{\mathsf{network}_{#1}(#2)}
\newcommand{\dimension}[1]{\mathsf{dim}(#1)}

\title{Decomposing Petri nets}
\author{Julian Rathke \and Pawe{\l} Soboci{\'n}ski \and Owen Stephens}
\institute{ECS, University of Southampton, UK}
\begin{document}
\maketitle
\begin{abstract}
In recent work, the second and third authors introduced a technique for
reachability checking in 1-bounded Petri nets, based on \emph{wiring
decompositions}, which are expressions in a fragment of the compositional algebra of \emph{nets
with boundaries}. Here we extend the technique to the full
algebra and introduce the related structural property of \emph{decomposition
width} on directed hypergraphs. Small decomposition width is necessary for the
applicability of the reachability checking algorithm. We give examples
of families of nets with constant decomposition width and develop the underlying
theory of decompositions.
\end{abstract}

\section*{Introduction}

Model checking asynchronous systems is notoriously susceptible to state
explosion. Historically, Petri nets are one of the most popular formalisms for modelling asynchronous systems. Several model checking problems
reduce to checking reachability in (bounded) Petri nets, where state explosion manifests itself in the fact that the set of markings is exponential in the number of places.
Our approach to the problem of state explosion is to check reachability of a
net in a divide-and-conquer, dynamic programming style~\cite{Sobocinski2013}
by considering decompositions of the net into smaller subnets and
checking reachability locally. Clearly, this approach relies heavily on a
principled notion of Petri net decomposition, which is the topic of this
paper.

In~\cite{Soboci'nski2010} the second author introduced a compositional algebra
of 1-bounded Petri nets, called \emph{nets with boundaries}, which was later
extended by Bruni, Melgratti and Montanari~\cite{Bruni2011} to cover P/T nets;
see~\cite{Bruni2012} for a complete exposition. A net with boundaries induces a
labelled transition system (LTS) where the states correspond to the
markings of the net and the transitions witness the firings of
independent sets of net transitions. Following the process calculus tradition,
the labels of LTS transitions describe synchronisations with the environment.

In recent work~\cite{Sobocinski2013}, the second and third author used this
algebra to check reachability for 1-bounded nets. A decomposition of a net into
an expression in the algebra of nets with boundaries is called a \emph{wiring
decomposition}---concretely, it is a tree, with internal nodes labelled by the
two operations `$\comp$' and `$\otimes$' for composing nets with boundaries,
and the leaves labelled with individual nets with boundaries. For the purposes
of reachability, given a wiring decomposition, each component net's LTS is
considered as a non-deterministic finite automaton (NFA) with initial state the
initial (local) marking and final state the desired (local) marking. Because
the algebra is compositional, the NFA for the entire net can be obtained by
composing the NFAs of the individual component nets, following the structure of
the wiring decomposition. This underlying algebra of NFAs (transition systems)
is that of Span(Graph)~\cite{Katis1997a}.

If, given a net, a ``good'' wiring decomposition can be found
then characterising communication between components will require small
(w.r.t. the global statespace) amounts of information.
Once reachability is checked locally, local statespace can be discarded and thus state-explosion circumvented. Exposing the regular structure of a net, moreover, allows repeated work to be avoided: memoisation of local reachability checks on small component nets leads to better performance. As a result, in some
examples (see~\cite{Sobocinski2013}) reachability checking is \emph{linear} in
the size of the net, even when \emph{the length of the minimal firing sequence
required to reach the desired marking is non-linear}. The approach can thus
sometimes outperform classical techniques for checking reachabilty, for
instance, those based on the unfolding technique, originally pioneered by
McMillan~\cite{McMillan1995}.

The applicability of the technique described in~\cite{Sobocinski2013} is thus
closely related to the problem of obtaining wiring decompositions of nets. When
translating a net with boundaries to an LTS, its size depends on two
factors: \textit{(i)}{ the number of places and \textit{(ii)} the
size of its boundaries. The size of the LTS statespace is typically
 exponential in the number of places, as states correspond to markings.
The size of the set of LTS labels is exponential in the size of the
boundary.

What is a ``good'' wiring decomposition? Recall that a wiring decomposition is
a tree.  Firstly, the leaves of this tree are subnets and, in order
to keep the size of the LTSs manageable, each leaf should have few places, and
a small boundary. Secondly, each subtree of the wiring decomposition should
result in a net with a small boundary, to keep the size of the label set small
when checking the compositions of subnets. Thirdly, the minimised statespaces
of (NFAs of) subtrees should ``grow slowly'' towards the root, so that state
explosion is avoided.

The first two conditions amount to a \emph{structural
property}\footnote{Analogously to how pathwidth and treewidth are
structural properties of undirected graphs. Treewidth is well known in the CONCUR community through Courcelle's theorem~\cite{Courcelle1990}.} on the underlying net, considered as a directed hypergraph. We call this property \emph{decomposition width}: a
net (or directed hypergraph) has \emph{decomposition width} $k$ iff it has a
wiring decomposition of width $k$. The third condition is a \emph{semantic
property}: in particular, a net can have several decompositions of equal width
that perform differently with respect to the third criterion. Several examples
are given in~\cite{Sobocinski2013}.

In this paper, we concentrate on the structural property of decomposition
width. We make use of the full algebra of nets with
boundaries~\cite{Bruni2012}, which allows us to cover more examples than
in~\cite{Sobocinski2013} where we considered a restricted variant. We discover
that sparsely connected nets, ``tree-like'' nets, but also cliques and related
``densely'' connected nets are all examples of families of nets that admit
decompositions of small width. By this we mean that there is some $k$ such that
the entire family of nets (of arbitrary size) has decomposition width $k$. We
also give an example of a family of grid nets that we conjecture not to admit bounded decomposition. Decomposition width is thus different to parameters which have previously been considered on nets, such as treewidth of the flow graph~\cite{Praveen2011}; (like treewidth, grids seem problematic, but unlike treewidth, cliques are not.)

Concretely, the contributions of this paper are:
\begin{itemize}
\item The full algebra of nets with
    boundaries~\cite{Soboci'nski2010,Sobocinski2013} is used with
    the reachability technique of~\cite{Sobocinski2013}. We thus extend the
    applicability of the technique to examples such as clique nets.
\item The structural property of decomposition width on nets (or, more
    generally, on directed hypergraphs) is introduced.
\item The theory of wiring decompositions
    is developed, which allows us to give lower bounds on boundary sizes
    in certain decompositions.
\end{itemize}

\paragraph{Structure of the paper.}
In \S\ref{sec:nets_with_boundaries} we recall and generalise the definition of nets with
boundaries. In \S\ref{sec:width} we introduce the notion of \emph{decomposition
width}, and explain its central role in the performance of our technique, which we
briefly recap in \S\ref{sec:recap}. We discuss an extension to the previously
considered net algebra in \S\ref{sec:extended_algebra}, using the full algebra
of nets with boundaries in order to apply our technique to more cases. In
\S\ref{sec:principles} we introduce the principles of decomposition, and use
them to show lower bounds for the size of decompositions in certain nets.

\section{Preliminaries}
\label{sec:prelims}

For $n\in\N$, let $\ordinal{n}=\{0,1,\dots,n-1\}$. Write $2^X$ for the powerset
of $X$ and $X+Y$ for the set $\{\,(x,0)\;|\;x\in X\,\}\cup\{\,(y,1)\;|\;y\in
Y\,\}$.

\begin{definition}[1-bounded Petri net]
A net $N$ is $(P,T,\pre{-},\post{-})$ where
\begin{itemize}
\item[-] $P$ is the set of places, $T$ is the set of transitions
\item[-] $\pre{-},\,\post{-}:T\to 2^P$ give, respectively, the pre- and
    post-sets of each transition.
\end{itemize}
\end{definition}
We write $\places{N}$ and $\trans{N}$ for the place and transition sets,
respectively, of $N$. Our underlying semantics is a step firing semantics
where independent sets of transitions can be fired together; to minimise
redundancy, we give the definition in~\eqref{eq:transitionRelation} in the
more general setting of nets with boundaries.


\section{Nets with boundaries}
\label{sec:nets_with_boundaries}


A net with boundaries~\cite{Soboci'nski2010} is a Petri net together with two
ordered sets of \emph{boundary ports}, to which net transitions can connect.
Nets with boundaries inherit the algebra of monoidal categories for
composition. In this paper we expand upon the previous exposition of nets with
boundaries in~\cite{Soboci'nski2010,Sobocinski2013}, by lifting the restriction
of~\cite{Sobocinski2013} that at most one transition can connect to any one
place on a boundary.

\begin{definition}[Net with boundaries]
A \emph{net with boundaries} $N:k\to l$ is $(P, T, k, l, \!\pre{-},
\post{-},\!\source{-},\target{-},\contention)$ where:
\begin{itemize}
\item[-] $(P,T,\pre{-},\post{-})$ is a 1-bounded Petri net
\item[-] $k,l\in\N$ are, respectively, the left and the right boundaries
\item[-] $\source{-}:T\to 2^{\ordinal{k}}$ and
    $\target{-}:T\to 2^{\ordinal{l}}$ connect each transition to, respectively, the
    left and the right boundary
\item[-] $\contention$ is a contention relation (see Definition~\ref{defn:contention} below).
\end{itemize}
\end{definition}
Isomorphism, $(N:k\to l) \cong (M:k\to l)$, is defined in the obvious way as
bijections between place sets and transition sets that respect pre and post
sets, boundary connections and contention. 1-bounded Petri nets $N$ can be
considered as nets with boundaries $N:0\to 0$ (with the minimal contention
relation).

\begin{remark}\label{rmk:restriction}
In~\cite{Sobocinski2013} we assumed that for any $t\neq t'\in T$,
$\source{t}\cap\source{t'}=\varnothing$ and
$\target{t}\cap\target{t'}=\varnothing$; i.e. no two transitions connect to the
same boundary port. In \secref{sec:extended_algebra}, we show that certain
nets admit better decompositions without this restriction.
\end{remark}

In order to leave out the assumption, we must recall the notion of
\emph{contention} between transitions, first proposed in~\cite{Bruni2012}.
Transitions in contention cannot fire concurrently. In ordinary nets, two
transitions are in contention precisely when they compete for a resource, for
instance they consume or produce a token at the same place. In nets with
boundaries, connecting two transitions to the same boundary port is another
source of contention. Examples and the mathematical foundations
of contention are given in~\cite{Sobocinski2013a}. Roughly speaking, contention is ``remembered'' in compositions; this is needed in order to ensure that net composition is compatible with the composition of underlying transition systems.

\begin{definition}[Contention Relation]\label{defn:contention}
    For a net $N$, a reflexive, symmetric relation, $\contention$, on
    $\trans{N}$ is said to be a \emph{contention relation}, if for all
    $(t, u) \in \trans{N} \times \trans{N}$ where at least one of the
    following holds
    \[
        \mathit{(i)}\  \pre{t} \intersection \pre{u} \neq \varnothing \quad
        \mathit{(ii)}\  \post{t} \intersection \post{u} \neq \varnothing \quad
        \mathit{(iii)}\  \source{t} \intersection \source{u} \neq \varnothing \quad
        \mathit{(iv)}\  \target{t} \intersection \target{u} \neq \varnothing.
	\]
    then $t \contention u$.
\end{definition}

\begin{remark}[Graphical representation]
See~\figref{fig:Tnk} and~\figref{fig:Tlnk} for several simple examples of
nets with boundaries. The graphical representation we use is non-standard and
deserves an explanation:
Concretely, each place is drawn as ``directed,'' having an \textit{in} and
\textit{out} port. Transitions are undirected links that connect an arbitrary
set of boundaries and place ports.
The benefit of doing this is that links, which are connected together during
composition, do not need to be directionally compatible in order to compose
two nets.
Instead, the places contain the firing direction information, localising
the firing semantics to subcomponents.
The preset of a transition is simply the set of places to which the transition is connected
via the \textit{out} port (a triangle pointing out of a place), symmetrically,
its postset is the set of places to which the transition is connected via the
\textit{in} port (a triangle pointing into a place.)
In order to distinguish individual transitions and increase legibility,
transitions are drawn with a small perpendicular mark.
\end{remark}

A transition set $U$ is \emph{mutually independent} (MI) if $\forall
u, v \in U.\; u \contention v \Rightarrow u = v$. Contention can be
lifted to \emph{sets} of mutually independent transitions: $U
\contention V$ iff $\exists u \in U, v \in V.\; u \contention v$. Mutually
independent transitions can fire concurrently: each net with boundaries $N:k\to
l$ determines an LTS\footnote{Originally described in Katis et
al~\cite{Katis1997b}.}, $\semanticsOf{N}$, whose transitions witness the step
firing semantics of the underlying net. The labels are pairs of binary strings
of length $k$ and $l$, respectively.  The states are markings of $N$, denoted
by $\marking{N}{X}$, where $X\subseteq \places{N}$. The transition relation is
defined\footnote{We equate binary strings of length $k$ with subsets of $[k]$,
in the obvious way.}: \begin{multline}\label{eq:transitionRelation}
\marking{N}{X} \dtrans{\alpha}{\beta} \marking{N}{X'} \Leftrightarrow\exists\text{ MI }U\subseteq T, \pre{U}\subseteq X,\,\post{U}\cap X=\varnothing,\,\\ X'= (X\backslash \pre{U})\cup \post{U},\, \source{U}=\alpha,\, \target{U}=\beta.
\end{multline}

In order to compose nets with boundaries along a common boundary, we recall the notion of \emph{synchronisation}. For sets of transitions $U\subseteq T$ we abuse notation and write $\pre{U}=\bigcup_{u\in U}\pre{u}$, and similarly for $\post{U}$, $\source{U}$ and $\target{U}$.
\begin{definition}[Synchronisations]\label{defn:synch}
    A synchronisation between two nets with boundaries $M:l\to m$, $N:m\to n$ is a pair $(U,V)$, $U \subseteq \trans{M}$ and $V \subseteq \trans{N}$, of mutually independent sets of transitions, such that $\target{U} = \source{V}$.

    Synchronisations inherit an ordering from the subset ordering, pointwise:
    $(U, V) \subseteq (U', V') \Defeq U \subseteq U' \wedge V \subseteq
    V'$. The trivial synchronisation is $(\varnothing, \varnothing)$. A
    synchronisation $(U, V)$ is \emph{minimal} when it is not trivial, and for
    all $(U', V') \subseteq (U, V)$, then $(U', V')$ is trivial or equal to
    $(U, V)$. Contention can be lifted to minimal synchronisations: $(U, V) \contention (U', V') \Defeq U \contention U' \vee V \contention V'$.
\end{definition}
Given $M:l\to m$, $N:m\to n$, let $\minsync{M}{N}$ be the set of minimal synchronisations. We can now define the two ways of composing nets with boundaries.
\begin{definition}[Composition along common boundary]
The composition of nets $M:l\to m$ and $N:m\to n$, $M;N:l\to n$ has the following components:
\begin{itemize}
    \item[-] the set of places is $\places{M} + \places{N}$.
    \item[-] the set of transitions is $\minsync{M}{N}$, the set of minimal synchronisations.
    \item[-] $\forall (U, V) \in \minsync{M}{N}, \pre{(U, V)} \Defeq \pre{U}+
        \pre{V}$ and $\post{(U,V)} \Defeq \post{U}+\post{V}$.
    \item[-] $\forall (U, V) \in \minsync{M}{N}, \source{(U, V)} \Defeq \source{U}$ and $\target{(U,V)} \Defeq \target{V}$.
    \item[-] Contention on minimal synchronisations as described in Definition~\ref{defn:synch}.
\end{itemize}
\end{definition}
\begin{definition}[Tensor product]
The tensor product of nets $M:l\to m$ and $N:k\to n$,
$M\otimes N:l+k \to m+ n$ has the following components:
\begin{itemize}
    \item[-] the set of places is $\places{M}+\places{N}$.
    \item[-] the set of transitions is $\trans{M}+\trans{N}$.
    \item[-] the preset, postset, and boundary maps are defined in the obvious way.
    \item[-] transitions in $\trans{M}+\trans{N}$ are in contention exactly when they are in contention in either $M$ or $N$.
\end{itemize}
\end{definition}
Both `$\comp$'-composition and `$\ten$'-composition are associative up-to
isomorphism. In examples we will make use of a exponentiation notation: given $N:l\to l$, we write $N^k$ for the `$\comp$'-composition of $N$ with itself $k$-times: $N \comp N \comp \dots \comp N$.

There are several compositionality results reported
in~\cite{Soboci'nski2010,Bruni2012,Sobocinski2013} (e.g. Theorem 3.8 of \cite{Soboci'nski2010}); essentially the idea is that firings of a composed net (as LTS transitions $\dtrans{\alpha}{\beta}$) are in direct correspondence with firings ($\dtrans{\alpha}{\gamma}$ and
$\dtrans{\gamma}{\beta}$) of components.


\begin{figure}[h]
    \centering
    \begin{subfigure}{0.45\textwidth}
        \centering
        \includegraphics[height=3.5cm]{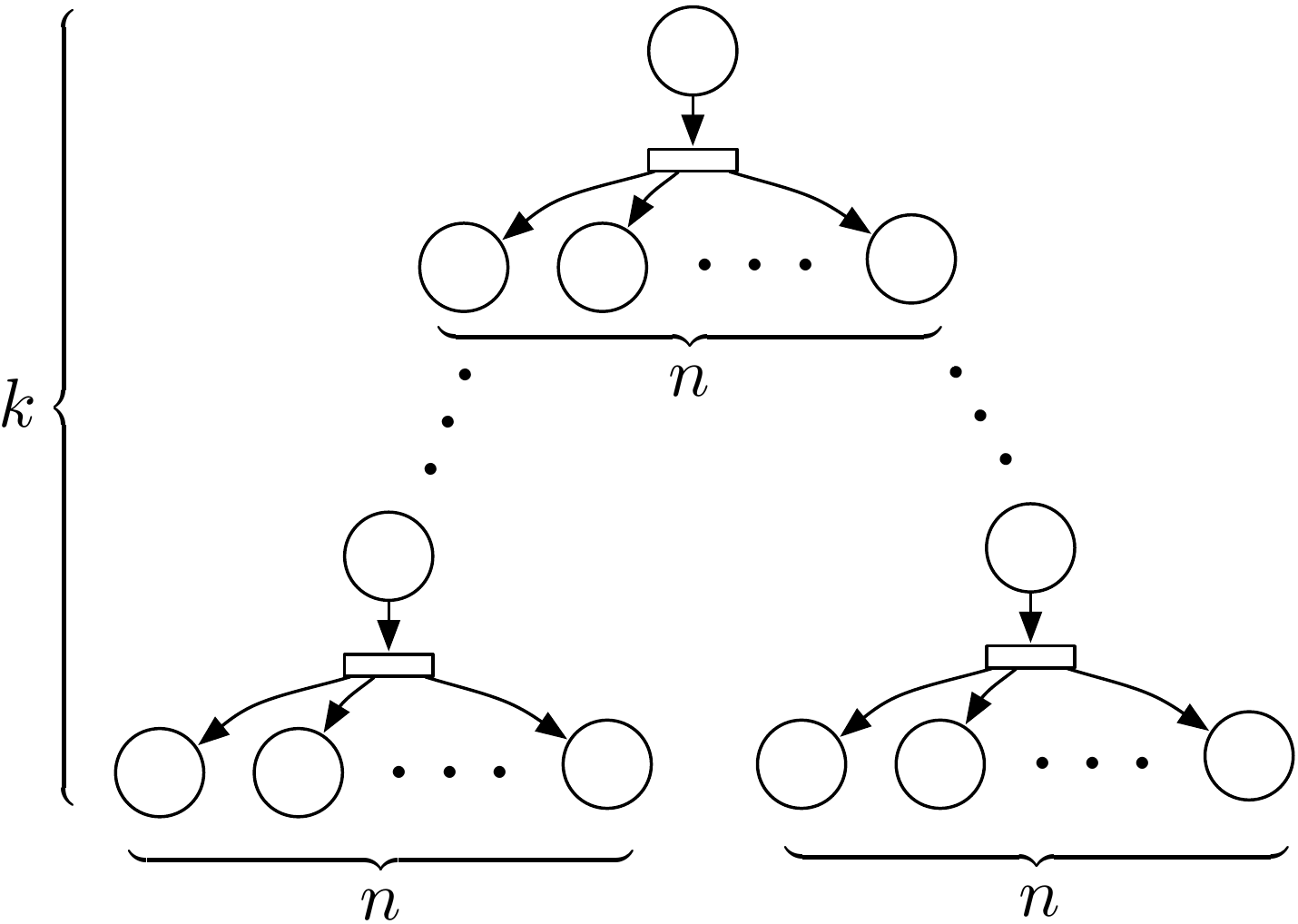}
        \caption{$T_\diag^{n,k}$ - single transitions between parent and
        children. \label{fig:tdiag}}
    \end{subfigure}
    \hfill
    \begin{subfigure}{0.45\textwidth}
        \centering
        \includegraphics[height=3.5cm]{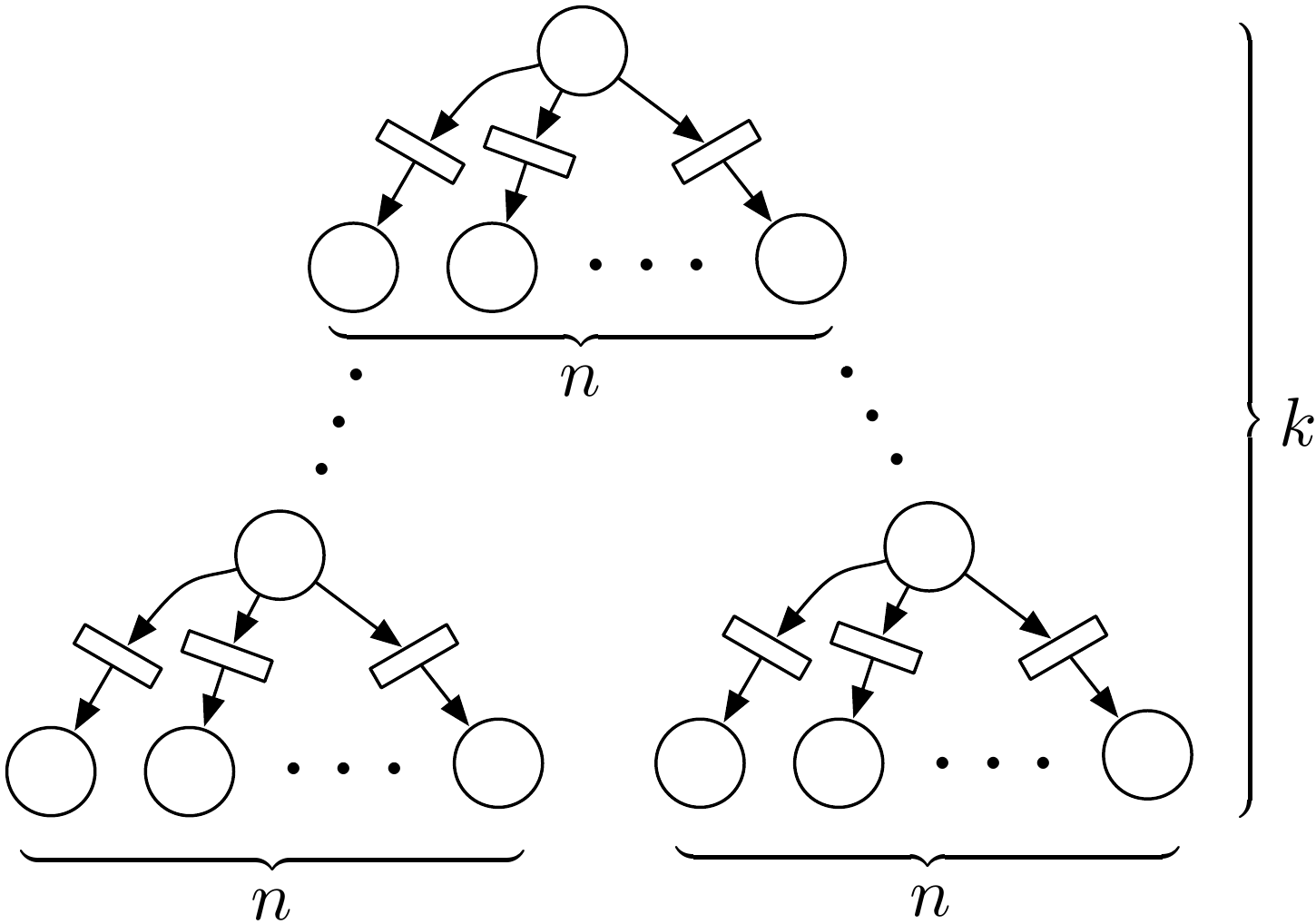}
        \caption{$T_{\ldiag}^{n,k}$ - separate transitions between parent and
        children.}
        \label{fig:tldiag}
    \end{subfigure}
    \caption{Complete tree nets of depth $k$ and width $n$.}
\end{figure}

\begin{example}
    \label{exm:tdiag}
    As an example of the use of the algebra of nets with boundaries, consider
    the net $T_\diag^{n,k}$, where $k,n\geq 1$, in \figref{fig:tdiag}.
    We can give a simple decomposition that relies on the components nets illustrated in~\figref{fig:Tnk}.
\begin{figure}
\begin{equation}
\lower12pt\hbox{$\includegraphics[height=1.2cm]{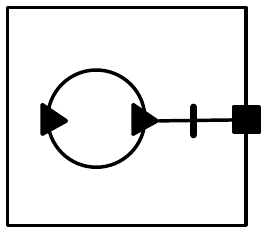}$} \qquad
\lower12pt\hbox{$\includegraphics[height=1.1cm]{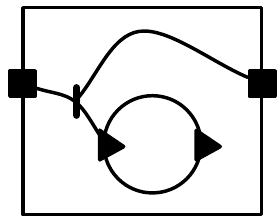}$} \qquad
\lower15pt\hbox{$\includegraphics[height=1.4cm]{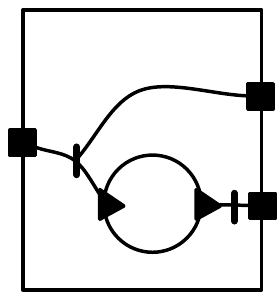}$} \qquad
\lower8pt\hbox{$\includegraphics[height=.7cm]{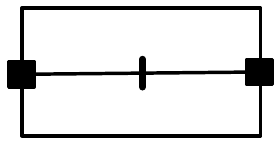}$} \qquad
\lower12pt\hbox{$\includegraphics[height=1.2cm]{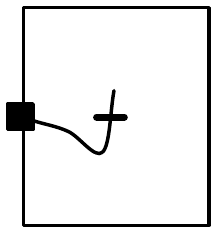}$}
\end{equation}
\[
R: 0\to 1
\qquad
L_\diag: 1\to 1
\qquad
N_\diag: 1\to 2
\qquad
I: 1\to 1
\qquad
\bot: 1\to 0
\]
\caption{Components used in the decomposition of $T_\diag^{n,k}$.\label{fig:Tnk}}
\end{figure}
First, we define the net with boundaries $B_\diag^{n,k}:1\to 0$ by recursion on $k$:
\begin{equation}\label{eq:bdelta}
    B_\diag^{n, k} \Defeq \begin{cases}
            {L_\diag}^n \comp \rightEnd & \mbox{if } k = 1 \\
            {(N_\diag \comp ( I \otimes B_\diag^{n,i} ) )}^n \comp \rightEnd & \mbox{if } k = i + 1 \\
          \end{cases}
\end{equation}
whence it follows that \vspace{-1.2em}
\begin{equation}\label{eq:tdiag}
T_\diag^{n,k} \cong R \comp B_\diag^{n,k}.
\end{equation}
\end{example}
The decomposition of $T_\diag^{2,2}$, following the definition
in~\eqref{eq:tdiag}, is illustrated in \figref{fig:t22}; components enclosed
with $\includegraphics[width=1cm]{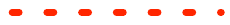}$ are composed with `$\comp$', while
components enclosed with $\includegraphics[width=1cm]{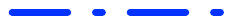}$ are composed
with `$\otimes$'.

\begin{figure}
\[
\!\includegraphics[height=2.9cm]{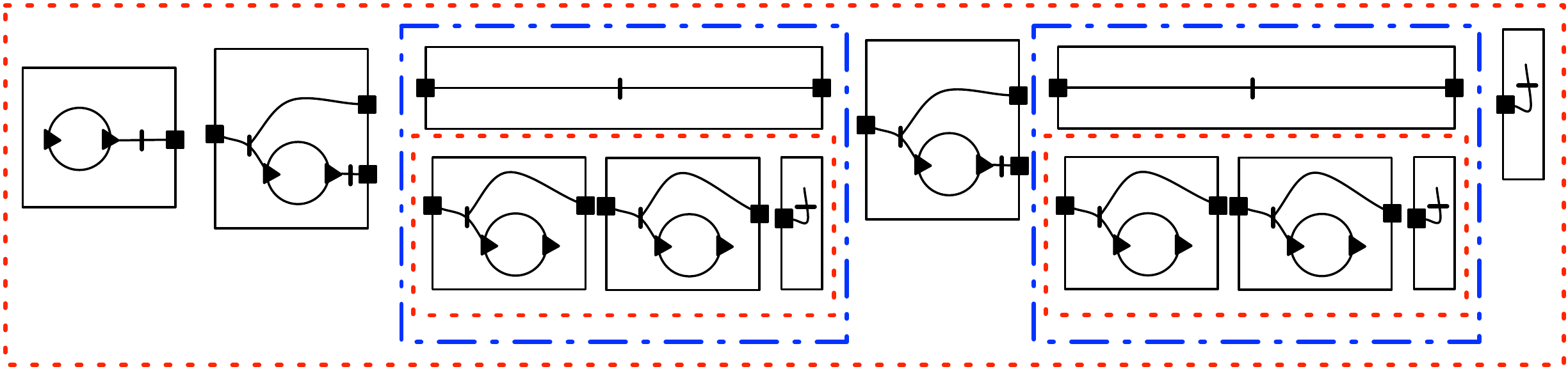}
\]
\caption{Decomposition of $T_\diag(2,2)$.\label{fig:t22}}
\end{figure}

\section{Wiring Decompositions}
\label{sec:width}


To formalise the decomposition of nets with boundaries, such as that presented
in \exmref{exm:tdiag}, we introduce the concept of a \emph{wiring
decomposition}. A \emph{wiring expression} is a syntactic term formed from the
following grammar:
\[
    T \bnfEq x \bnfSep T \comp T \bnfSep T \otimes T
\]
that is, a binary tree, with internal `$\comp$' and `$\otimes$' nodes and
variables at the leaves.

A \emph{variable assignment} $\wiringVarEnv$ is a map that takes variables to
nets with boundaries. Given a pair $\wiringDecomp$ of a wiring expression
$t$ and variable assignment $\wiringVarEnv$, its semantics
$\semanticsOf{t}_{\wiringVarEnv}$ is a net with boundaries, defined inductively:
\[
\semanticsOf{x}_\wiringVarEnv \Defeq \wiringVarEnv(x)
\qquad
\semanticsOf{t_1 \comp t_2}_\wiringVarEnv \Defeq
        \semanticsOf{t_1}_\wiringVarEnv \comp \semanticsOf{t_2}_\wiringVarEnv
\qquad
\semanticsOf{t_1\otimes t_2}_\wiringVarEnv \Defeq
        \semanticsOf{t_1}_\wiringVarEnv\otimes\semanticsOf{t_2}_\wiringVarEnv
\]
We implicitly assume that variable assignments are compatible with $t$: in the
sense that only nets with a common boundary are composed; we omit the details,
which are straightforward.

\begin{definition}
Given a net $N:k\to l$, we say that the pair $\wiringDecomp$ is a
\emph{wiring decomposition} of $N$ if $\semanticsOf{t}_\wiringVarEnv\cong N$.
\end{definition}

\begin{example}\label{exm:tdiag_decomp}
    A wiring decomposition of $T_\diag^{n,k}$
    can be obtained
    from~\eqref{eq:bdelta} and~\eqref{eq:tdiag} above
    by rewriting the equations as syntactic terms, with variables
    in place of each of the small component nets, and choosing a particular
    association for the `$\comp$' and `$\otimes$' expressions. We will see below
    that this particular choice of associativity is unimportant in terms of
    decomposition size but nonetheless has ramifications for the efficiency
    of our reachability checking algorithm (see \cite{Sobocinski2013} for examples).

\end{example}


\subsection{Reachability via Compositionality}
\label{sec:recap}


In this section we give a summary of the approach introduced
in~\cite{Sobocinski2013}, where, given a 1-bounded Petri net,
we decompose it using algebra of nets with boundaries to calculate
reachability in divide-and-conquer style. However, the technique is only viable for nets for which we can find ``small'' decompositions.

As discussed in \secref{sec:nets_with_boundaries}, each net with boundaries
determines an LTS, witnessing its step semantics. For a given reachability
problem, we can transform the LTS into a NFA, by letting the initial and final
states of the NFA be those corresponding to the initial and final markings.
Reachability then coincides with non-emptiness of the NFA's language.
To achieve a bounded statespace using our technique, we require that the
considered nets admit ``small'' decompositions (the precise definition of which
is presented \secref{sec:decomposition_width}.) 

We rely on the compositionality of nets with boundaries in order to perform
local checking of global reachability, w.r.t. interactions on a components'
boundaries: the NFA of a component net encodes the required ``protocol'' that
the net must engage in with its environment in order to reach a (locally) final
marking. Thus, to generate $NFA(x \comp y)$, it suffices to generate $NFA(x)$,
and $NFA(y)$ and compose them using a variant of the product construction:
$(a,b)\dtrans{\alpha}{\beta}(a’,b’)$ iff $\exists \gamma.\;
a\dtrans{\alpha}{\gamma}a’ \mathrel{\wedge}
b\dtrans{\gamma}{\beta}b’$\footnote{Similarly, we can perform
$\ten$-composition on NFAs with a different modification of the standard
product construction.}.

Hiding internal computations improves the performance of our technique; we
perform $\epsilon$-closure on the obtained NFAs, identifying internal states
that are distinguished only by transitions that do not alter the net's
protocol. Further, we avoid state explosion by minimising the NFA's
representation size, applying determinisation followed by DFA-minimisation to
generate an automaton that recognises the same language, but with potentially
simpler structure. Observe that after performing $\epsilon$-closure and
minimisation on the NFA of a net $N:0\to 0$ we have either the trivially
accepting, or trivially rejecting automaton.

Furthermore, many nets have a repeated internal structure---several examples
being presented in~\cite{Sobocinski2013}, and this paper. By exposing this
repeated structure through decomposition, we avoid duplicating work, by
employing memoisation such that conversion to NFA, or NFA composition is only
performed once.

\begin{example}
Consider a decomposition of $T_\diag^{2,2}$, as defined in~\eqref{eq:bdelta}
and~\eqref{eq:tdiag}, and illustrated in \figref{fig:t22}. Let the initial
marking be a single token at the root place, and the final marking having only
leaves marked. The minimal DFAs obtained from this decomposition are presented
in \figref{fig:tdiag_nfas}\footnote{We have omitted error states if present.
Labels indicate interaction with the boundaries: 00/1 is action on the right
boundary, with no action on either left boundary. `$\ast$' means either 0 or
1.}. For example, observe that $B_\diag^{2,1}$ reaches its local accept state
upon interacting once on its left boundary. Reachability is confirmed: the
minimal DFA representing $T_\diag^{2,2}$ is the trivial accepting automaton.
\end{example}
\begin{figure}[h]
    \centering
    \begin{subfigure}{0.15\textwidth}
        \centering
        \includegraphics[scale=0.5]{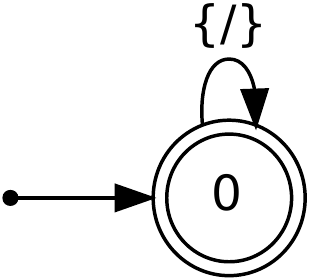}
        \caption{$T_\diag^{2,2}$}
    \end{subfigure}
    \hfill
    \begin{subfigure}{0.3\textwidth}
        \centering
        \includegraphics[scale=0.5]{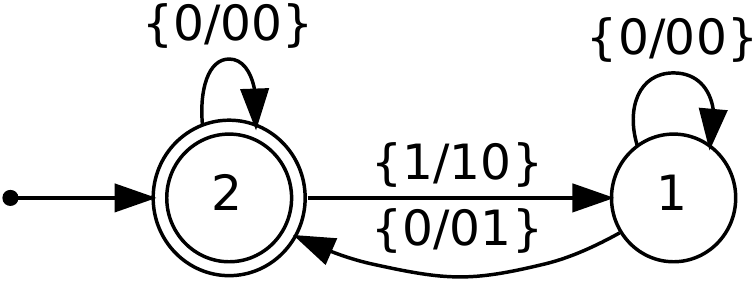}
        \caption{$N_\diag$}
    \end{subfigure}
    \hfill
    \begin{subfigure}{0.4\textwidth}
        \centering
        \includegraphics[scale=0.5]{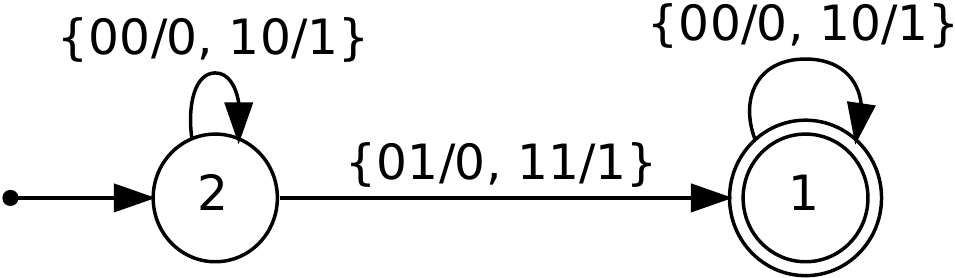}
        \caption{$I \ten B_\diag^{2,1}$}
    \end{subfigure}
    \\[0.3cm]
    \begin{subfigure}{0.15\textwidth}
        \centering
        \includegraphics[scale=0.5]{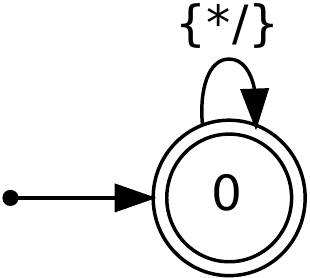}
        \caption{$\bot$}
    \end{subfigure}
    \hfill
    \begin{subfigure}{0.3\textwidth}
        \centering
        \includegraphics[scale=0.5]{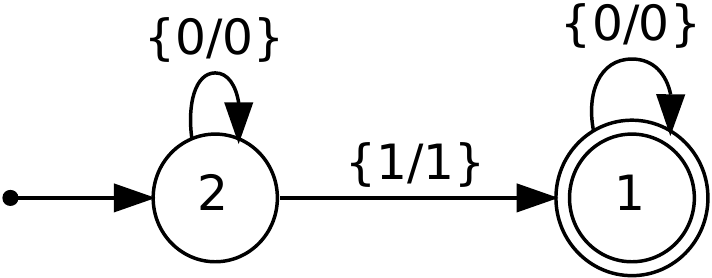}
        \caption{$L_\diag$}
    \end{subfigure}
    \hfill
    \begin{subfigure}{0.4\textwidth}
        \centering
        \includegraphics[scale=0.5]{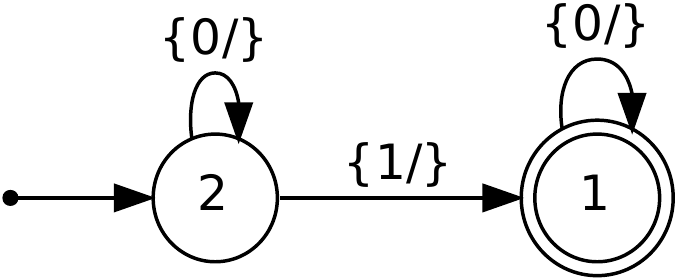}
        \caption{$B_\diag^{2,1}$}
    \end{subfigure}
    \caption{Component NFAs of the right-associative $T_\diag^{2,2}$
    decomposition.}
    \label{fig:tdiag_nfas}
\end{figure}

\subsection{Decomposition width}
\label{sec:decomposition_width}

As explained in the preceding section, the ``size'' of a decomposition is important for performance. We formalise this below.
\begin{definition}[Decomposition width]\label{defn:width}
    We say that a wiring decomposition, $\wiringDecomp$, of a net with
    boundaries has width $k \in \N$, if:
    \begin{enumerate}[(i)]
    \item $\forall x \in t$, $\semanticsOf{x}_{\wiringVarEnv} : l \to r$, with
        places $P$, satisfies $max(l,\left|P\right|,r) \leq k$, and
    \item for all subexpressions $t'$ of $t$, if
        $\semanticsOf{t'}_{\wiringVarEnv} : l \to r$ then $max(l,r) \leq k$.
    \end{enumerate}
    A net has \emph{decomposition width} $k$ if it has a wiring decomposition
    of width $k$. A family of nets $\setof{N_i}_{i \in I}$ has
    \emph{bounded decomposition width} if there exists $k \in \N$ such that for all
    $i \in I$, $N_i$ has decomposition width $k$.
\end{definition}

\begin{lemma}[Invariance w.r.t. associativity]
\label{lem:associativity_invariance}
Given a wiring decomposition, $\wiringDecomp$, of a net $N: l \to r$ that has width $k$,
and given a wiring expression $t'$ such that $t'$ is equivalent to $t$ up to
associativity of `$\comp$' and `$\ten$' then $(t', \wiringVarEnv)$ also has
width $k$ and $\semanticsOf{t'}_{\wiringVarEnv} : l \to r$.
\end{lemma}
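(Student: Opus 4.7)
The plan is to reduce to a single associativity rewrite and then induct. First, recall that both `$\comp$' and `$\ten$' are associative up to isomorphism (stated just after the definitions of composition and tensor). Since rewriting $t$ into $t'$ amounts to applying such isomorphisms at internal nodes, we immediately obtain $\semanticsOf{t'}_{\wiringVarEnv} \cong \semanticsOf{t}_{\wiringVarEnv} \cong N$, in particular the boundary types agree, so $\semanticsOf{t'}_{\wiringVarEnv} : l \to r$.

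For the width condition, note that $t$ and $t'$ have exactly the same multiset of leaves, so condition~(i) of Definition~\ref{defn:width} is automatic. The work lies in condition~(ii). I would proceed by induction on the number of elementary associativity rewrites taking $t$ to $t'$, so it suffices to handle the case where $t'$ is obtained from $t$ by a single rewrite of a subexpression $u$ (the other direction and the $\ten$-version being symmetric). In that case the set of subexpressions of $t'$ equals the set of subexpressions of $t$, except that one subexpression has been replaced by an iso-equivalent one (same boundaries), and exactly one new intermediate subexpression $v$ has appeared. So the only thing to check is that $v$ has boundaries $\leq k$.

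The core case analysis is:
\begin{itemize}
\item[-] $(a \comp b) \comp c \;\longleftrightarrow\; a \comp (b \comp c)$. With $a : l_0 \to l_1$, $b : l_1 \to l_2$, $c : l_2 \to l_3$, the new subexpression is either $b \comp c : l_1 \to l_3$ or $a \comp b : l_0 \to l_2$. In the former case, $l_3 \le k$ because $(a \comp b) \comp c$ was a subexpression with right boundary $l_3$, and $l_1 \le k$ because $a$ (or equivalently $b$), being a sub-subexpression of the original $t$, is itself a subexpression whose right (resp.\ left) boundary is $l_1$. The other direction is symmetric.
\item[-] $(a \ten b) \ten c \;\longleftrightarrow\; a \ten (b \ten c)$. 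The new subexpression's left and right boundaries are sub-sums of the boundaries of the enclosing triple tensor, which is a subexpression of the original $t$, so the bound $\leq k$ follows immediately by monotonicity of sums of naturals.
\end{itemize}

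The only slightly delicate point — and the one I expect to be the main obstacle — is the argument in the $\comp$-case that the fresh intermediate boundary $l_1$ is $\leq k$: this requires observing that $a$ and $b$, while not themselves subexpressions of the \emph{new} tree on the rewritten side, were (sub)-subexpressions in the original tree, so their boundaries were already controlled by the width hypothesis on $(t, \wiringVarEnv)$. Once this is isolated, the induction goes through mechanically, and together with the preservation of leaves and of semantics, both conclusions of the lemma follow.
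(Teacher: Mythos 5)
Your proof is correct, but it is organised differently from the paper's. The paper proceeds by structural induction on $t'$, viewing the top level of $t'$ (and of $t$) as an $n$-fold `$\ten$'- (or `$\comp$'-) composition of maximal factors $t_1',\dots,t_n'$ that are associativity-equivalent to the corresponding factors of $t$; the induction hypothesis bounds each factor, and every intermediate subexpression of $t'$ is a composition of a subsequence of these factors, whose boundaries are sub-sums (for `$\ten$') or outer boundaries of the first and last factor (for `$\comp$'), hence $\leq k$. You instead induct on the number of elementary associativity rewrites and analyse the single new subexpression created by one rewrite. The key content is the same in both arguments: the only boundaries that can appear in a reassociated tree are boundaries already carried by subexpressions of the original tree, so condition~(ii) of Definition~\ref{defn:width} is preserved, while condition~(i) is automatic because the leaves are unchanged. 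Your version has the merit of isolating the one delicate point explicitly (the fresh internal boundary $l_1$ in the `$\comp$'-case is controlled because $a$ and $b$ were already subexpressions of $t$) and of treating the `$\comp$'-case in full, which the paper's written proof leaves implicit (it only spells out the `$\ten$'-case). The paper's version avoids having to track how the set of subexpressions changes across a rewrite, at the cost of implicitly relying on the characterisation of associativity equivalence via equality of flattened $n$-fold forms.
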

\begin{proof}
Write $t \sim t'$ for equivalence up to associativity and proceed by induction on the structure
of $t'$. If $t'$ is a variable then it is equal to $t$ and hence the result follows.

Suppose that $t'$ is an n-fold `$\ten$'-composition of some $t_i'$ for $1 \leq i \leq n$
such that $t$ is also an n-fold `$\ten$'-composition of some $t_i$ with any other possible
association with $t_i' \sim t_i$.
By the induction hypothesis we see that
each $(t_i',\wiringVarEnv)$
has width $k$ and
$\semanticsOf{t_i'}_{\wiringVarEnv} : l_i \to r_i$ where $l = (\sum_{1 \leq i \leq n} l_i) \leq k$
and $r = (\sum_{1 \leq i  \leq n} r_i) \leq k$.
Any subexpression of $t'$ is either a subexpression of one of the $t_i'$ (and hence satisfies boundedness)
or some expression $t''$
containing a `$\ten$'-composition of a subsequence $I$ of the $t_i'$. The boundaries of $(t'',\wiringVarEnv)$ have
size $l_I = \sum_I l_i \leq k$ and $r_I = \sum_I r_i \leq k$.
Hence $(t',\wiringVarEnv)$ also has width $k$ and $\semanticsOf{t'}_{\wiringVarEnv} : l \to r$
as required. \qed
\end{proof}

Note that the algebra of nets with boundaries is actually an algebra of
directed hypergraphs (that happens to be compositional w.r.t. the net
semantics). Thus, the notion of decomposition width, introduced above,
is---more generally---a structural property of directed hypergraphs.

\begin{example}
Consider the net $T_\diag^{n,k}$ from~\figref{fig:tdiag}, decomposed
in~\eqref{eq:bdelta} and~\eqref{eq:tdiag}. For any $n, k$, this wiring
decomposition has width 2: observe that every component net of $B_\diag^{n,k}$
has at most one place and two boundary ports on either side Furthermore, it is
easy to confirm that at each internal node of the tree, two subtree nets are
composed such that the resulting net has boundaries $\leq 2$, i.e.
subexpressions have boundaries $\leq 2$. That is, a decomposition width of 2.
\end{example}

\section{Harnessing the full algebra}
\label{sec:extended_algebra}



In this section we use the full algebra of nets with boundaries in order to obtain decompositions of bounded width nets that do not have satisfactory decompositions using merely the subalgebra used in~\cite{Sobocinski2013}, described in Remark~\ref{rmk:restriction}.
Since, as explained in~\secref{sec:recap}, a bounded decomposition width is a necessary condition for the applicability of our reachability checking approach,  by doing so, we are able to extend its applicability to several natural families of nets.


\begin{example}\label{exm:tldiag_decomp}
    \label{exm:tldiag}
    Consider the family of nets $T_\ldiag^{n,k}$ in \figref{fig:tldiag}. These
    nets are similar to those discussed in \exmref{exm:tdiag}, but with $n$
    \emph{distinct} transitions from any non-leaf node to its children.



There is no way of obtaining a decomposition of bounded width with the
restriction of Remark~\ref{rmk:restriction}, i.e. at most one transition
connected to each boundary port. To see why, assume we have a decomposition and
consider the component that contains the root node: as we increase $n$ one
would have to either increase the size of the boundary or increase the number
of places within the component. Without the restriction we can connect multiple transitions to the same boundary port, and so modify the construction of \exmref{exm:tdiag} to obtain a decomposition for $T_\ldiag^{n,k}$: Again, first define the component net $B_\ldiag^{n,k}:1\to 0$ by recursion on $k$:
\begin{equation}\label{eq:bldiag}
    B_\ldiag^{n, k} \Defeq \begin{cases}
            {L_\ldiag}^n \comp \rightEnd & \mbox{if } k = 1 \\
            {(N_\ldiag \comp ( I \otimes B_\ldiag^{n,i} ) )}^n \comp \rightEnd & \mbox{if } k = i + 1 \\
          \end{cases}
\end{equation}
whence we have that:\vspace{-1.3em}
\begin{equation}\label{eq:tldiag}
T_\ldiag^{n,k} \cong R \comp B_\ldiag^{n,k}.
\end{equation}
\end{example}

\begin{figure}[h]
\vspace{-2.5em}
\begin{equation}
    \lower15pt\hbox{$\includegraphics[height=1.2cm]{Rnet}$} \qquad
    \lower15pt\hbox{$\includegraphics[height=1.2cm]{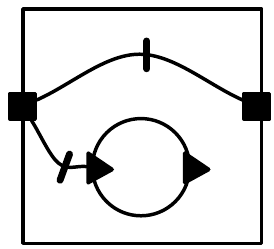}$} \qquad
    \lower19pt\hbox{$\includegraphics[height=1.4cm]{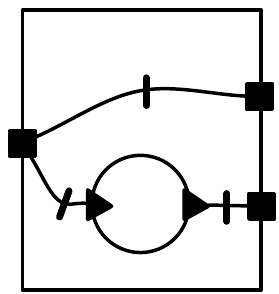}$} \qquad
    \lower10pt\hbox{$\includegraphics[height=.7cm]{Inet}$} \qquad
    \lower15pt\hbox{$\includegraphics[height=1.2cm]{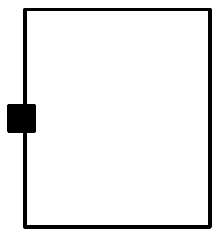}$}
\end{equation}
\[
    R: 0\to 1
\qquad
    L_\ldiag: 1\to 1
\qquad
    N_\ldiag: 1\to 2
\qquad
    I: 1\to 1
\qquad
    \rzero: 1\to 0
\]
\caption{Components used in the decomposition of $T_\ldiag^{n,k}$.\label{fig:Tlnk}}
\end{figure}


In addition to the decompositions in Examples~\ref{exm:tdiag_decomp}
and~\ref{exm:tldiag_decomp} we will consider two other families of nets that
are ``densely'' connected and show that they nevertheless have bounded
decomposition width.
\begin{figure}
\vspace{-2.5em}
\begin{subfigure}[b]{.35\textwidth}
\[
\includegraphics[height=3cm]{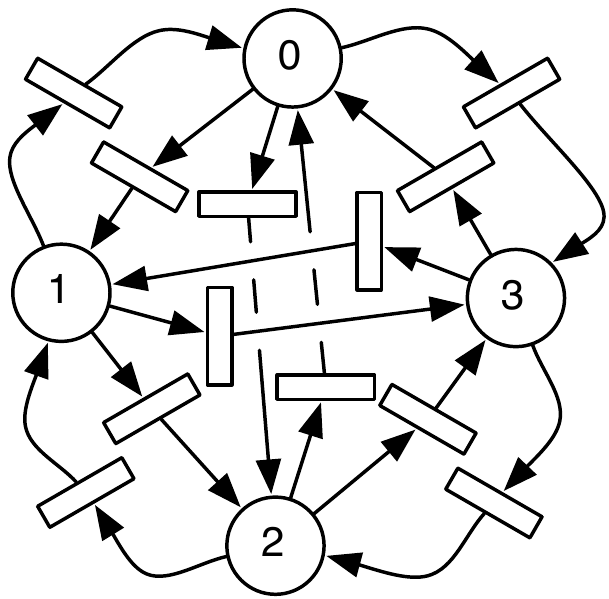}
\]
\caption{Net $C_4$.\label{fig:clique4}\vspace{-.3cm}}
\end{subfigure}
\begin{subfigure}[b]{.55\textwidth}
\[
\includegraphics[height=2cm]{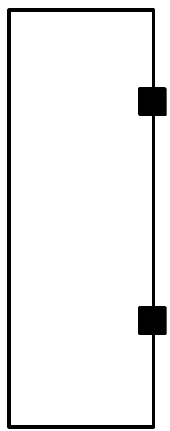}
\qquad
\includegraphics[height=2cm]{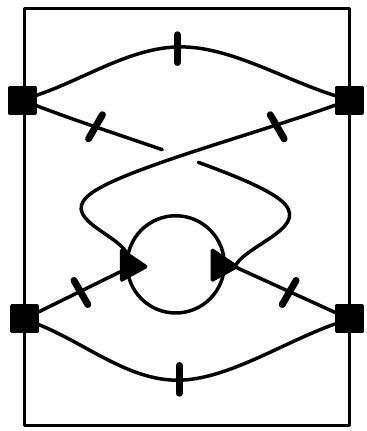}
\qquad
\includegraphics[height=2cm]{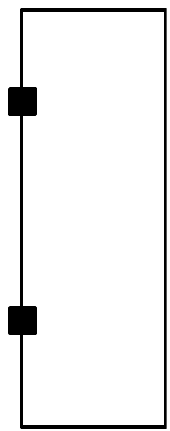}
\]
\[
\lzero \otimes \lzero: 0 \to 2
\qquad
S: 2\to 2
\qquad
\rzero \otimes \rzero: 2 \to 0
\]
\caption{$C_n$ wiring decomposition components.\label{fig:cliqueComponents}\vspace{-.1cm}}
\end{subfigure}
\caption{Decomposing cliques.}
\end{figure}

\begin{example}\label{exm:clique_decomp}
Consider the clique net $C_n$: it has $n$ places and $n\times (n - 1)$
transitions, one from each place to every other. An illustration of $C_4$ is
given in \figref{fig:clique4}. It is easy to see that the flowgraph of $C_n$
has treewidth $n-1$, on the other hand $C_n$ has decomposition width 2 for any
$n$.

The decomposition is simple and uses the components illustrated
in~\figref{fig:cliqueComponents}. Indeed, it is not difficult to see that $C_n
\cong (\lzero \otimes \lzero) \comp S^{n} \comp (\rzero \otimes \rzero)$.
\end{example}

\begin{figure}
\vspace{-2.5em}
\begin{subfigure}[b]{.35\textwidth}
\[
\lower20pt\hbox{$\includegraphics[height=3.3cm]{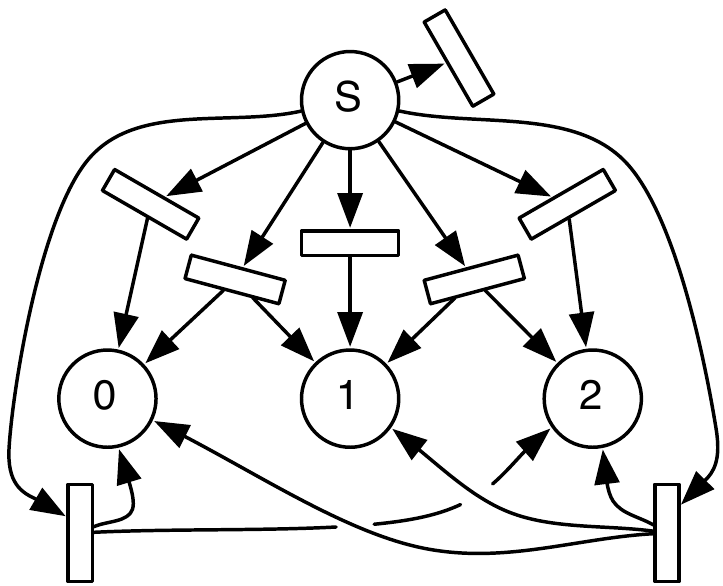}$}
\]
\caption{Net $P_3$.\label{fig:subset_net}\vspace{-.3cm}}
\end{subfigure}
\begin{subfigure}[b]{.55\textwidth}
\[
\includegraphics[height=1.5cm]{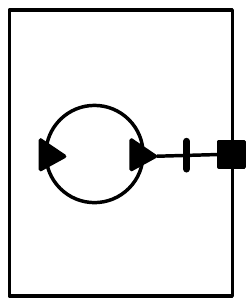}
\qquad
\includegraphics[height=1.5cm]{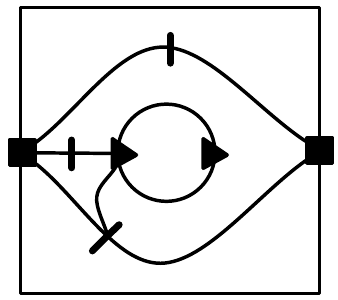}
\qquad
\includegraphics[height=1.5cm]{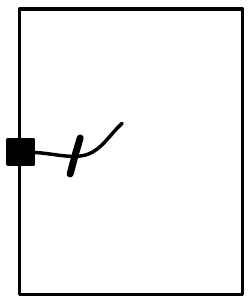}
\]
\[
R: 0 \to 1 \qquad
P: 1 \to 1 \qquad
\bot: 1 \to 0
\]
\caption{$P_i$ wiring decomposition components.\label{fig:Pncomponents}\vspace{-.1cm}}
\end{subfigure}
\caption{Decomposing subset nets.}
\end{figure}

\begin{example}\label{exm:subset_decomp}
Consider the net $P_n$, $n\geq 0$, with $n+1$ places. There is a chosen place
$S$, with the remaining places $0,1,\dots,n-1$, the elements of $[n]$. There
are $2^n$ transitions in $P_n$, all with the single source $S$ and targets
the elements of $2^{\ordinal{n}}$. See~\figref{fig:subset_net} for an
illustration of $P_3$. For any $n>1$, $P_n$ has a wiring decomposition of width
$1$: indeed, consider the components in~\figref{fig:Pncomponents}, then an easy
calculation confirms that $P_n \cong R \comp P^n \comp \bot$.
\end{example}

Having extended the scope of our reachability technique to that of the full
algebra of nets with boundaries, we are able to handle more examples, such as
those presented in this section.

\section{Principles of decomposition}
\label{sec:principles}


In Examples~\ref{exm:tdiag_decomp}, \ref{exm:tldiag_decomp},
\ref{exm:clique_decomp} and \ref{exm:subset_decomp} we exhibited several
families of nets with bounded decomposition width. In this section we develop
the theory of decompositions that will allow us to place lower bounds on the
size of shared boundaries in certain decompositions. Taking these initial
observations into consideration, we conjecture that the family of grid nets
$\{G_n\}_{n\in\N_+}$, with $G_3$ illustrated in~\figref{fig:grid}, does not
have bounded decomposition width.
\begin{figure}[h]
\centering
\vspace{-1em}
\begin{subfigure}[b]{.30\textwidth}
\centering
\[
\includegraphics[height=3.7cm]{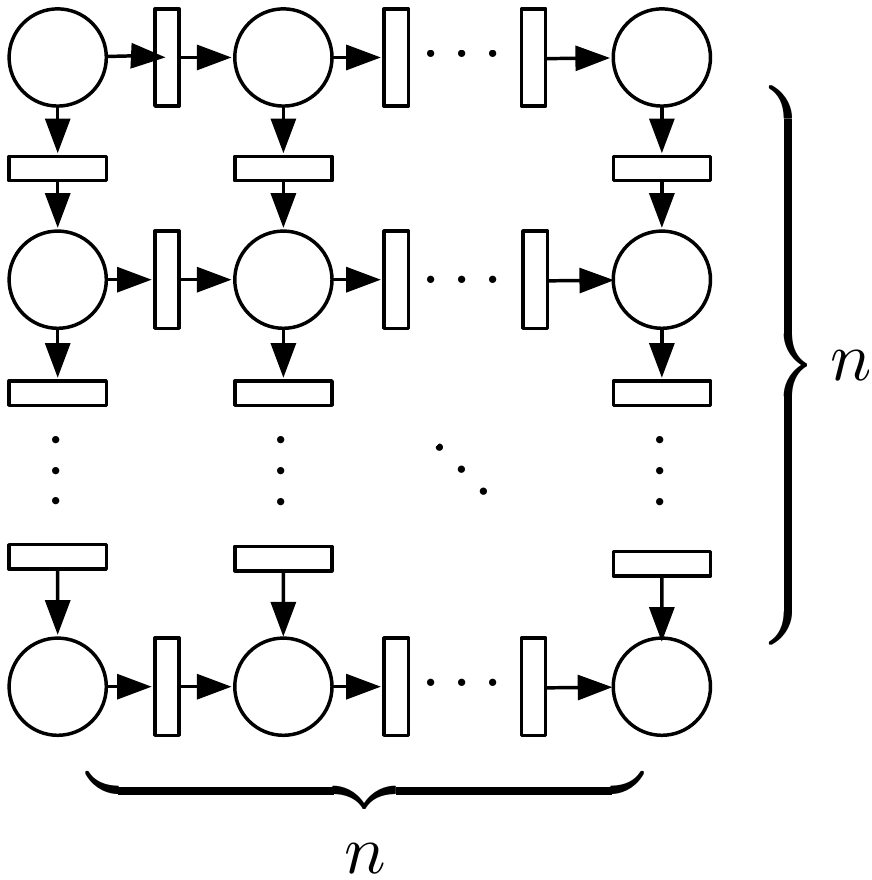}
\]
\caption{$G_n$.\label{fig:Gn}\vspace{-.3cm}}
\end{subfigure}
\hfill
\begin{subfigure}[b]{.6\textwidth}
\centering
\[
\raise15pt\hbox{$\includegraphics[height=2.9cm]{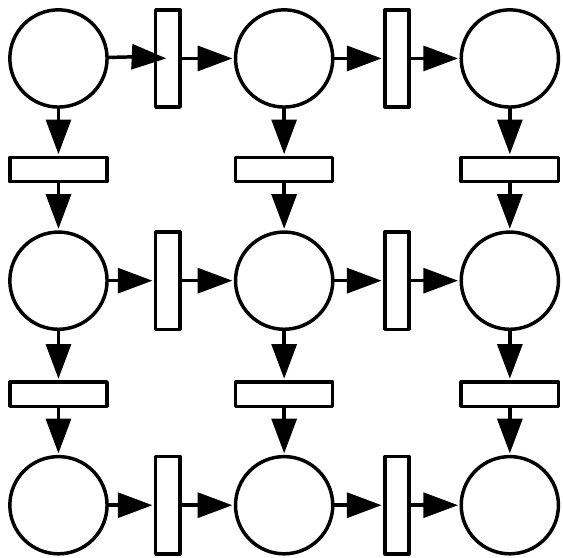}$}
\qquad
\includegraphics[height=3.7cm]{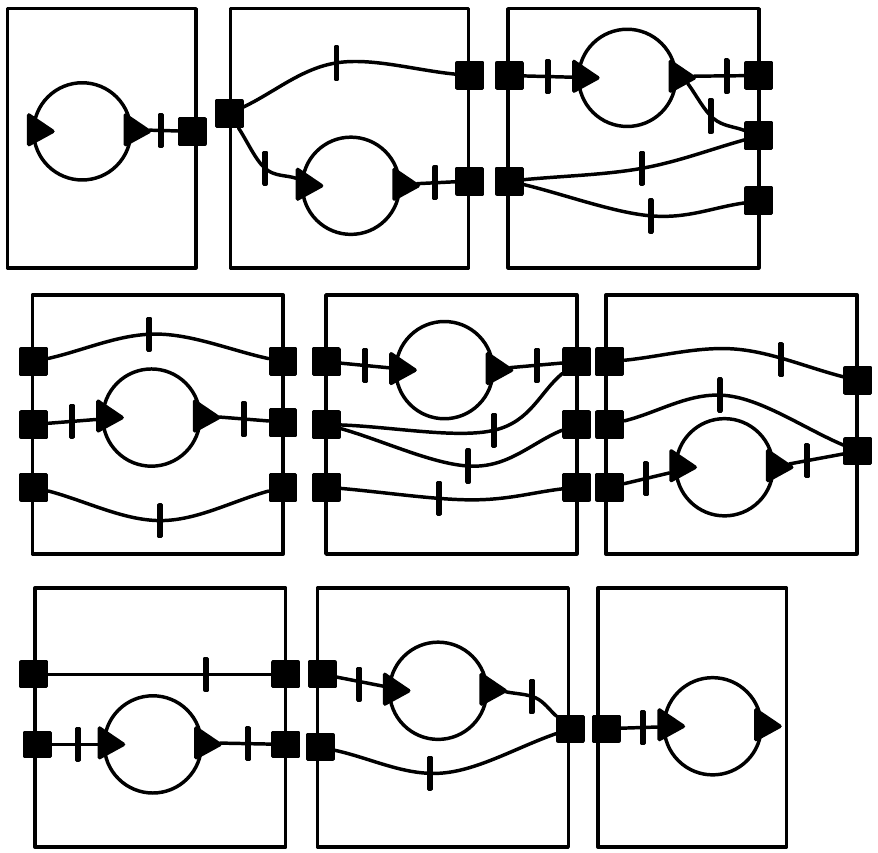}
\]
\caption{$G_3$ and a decomposition of width 3.\label{fig:G3}\vspace{-.3cm}}
\end{subfigure}
\caption{Decomposing grids.\label{fig:grid}\vspace{-.5cm}}
\end{figure}

\subsection{Portsets, connections and networks}
For a net $N:k\to l$ and $P\subseteq \places{N}$, the set of \emph{place ports}
of $P$ is: $\ports{P} \Defeq \setof{ \inport{p} \mid p \in P } \cup \setof{
\outport{p} \mid p \in P }$\footnote{For the sake of concreteness,
$\inport{p}\Defeq (p,in)$, $\outport{p}\Defeq (p,out)$.}. When we refer to
$N$'s \emph{boundary ports}, we mean the elements of $\ordinal{k}+\ordinal{l}$.
When referring to individual boundary ports we will write $\leftb{i}$ for
$(i,0)$ and $\rightb{i}$ for $(i,1)$. The set of ports of $N$ is all its place
ports and boundary ports: $\ports{N}\Defeq\ports{\places{N}}\cup
(\ordinal{k}+\ordinal{l})$. We will usually refer to sets of ports as
\emph{portsets}. Given a transition $t$, the portset of $t$ is:
\[
\ports{t} \Defeq \lbrace \outport{p} \mid p \in \pre{t} \rbrace \union \lbrace
\inport{p} \mid p \in \post{t} \rbrace \union (\source{t} + \target{t}).
\]
We will usually write portsets using angle brackets. For instance, consider the
net $R:0\to 1$ in \figref{fig:Pncomponents}, with $\places{R}=\{p\}$ and
$\trans{R}=\{t\}$. Then $\ports{\{p\}}=\portset{ \inport{p},\, \outport{p} }$,
$\ports{R} = \portset{ \inport{p},\, \outport{p},\, \rightb{0} }$ and
$\ports{t} = \portset{ \outport{p},\, \rightb{0} }$.

We will refer to sets of portsets as a \emph{connections} and write them using
square brackets. The \emph{connection} of a port $p\in\ports{N}$
is the set of portsets of all transitions that connect to $p$:
\[
\conn{p} \Defeq \setof{\ports{t}\backslash\{p\} \mid t \in \trans{N} \wedge \{
p \} \subset \ports{t}}.
\]
For example, in $P:1\to 1$ in \figref{fig:Pncomponents}:
$\conn{\leftb{0}}=\footprint{ \portset{\rightb{0}},\, \portset{\inport{p}},\,
\portset{\inport{p},\, \rightb{0}} }$, $\conn{\inport{p}}=\footprint{
\portset{\leftb{0}},\, \portset{\leftb{0},\, \rightb{0}} }$,
$\conn{\outport{o}}=\varnothing$ and $\conn{\rightb{0}}=\footprint{
\portset{\leftb{0},\,\inport{p}},\portset{\leftb{0}}}$.

We will find it useful to sometimes restrict $\conn{p}$ to those sets of ports
that intersect non-trivially with some subset $R$ of the ports of a net. We
write:
\[
\connr{R}{p} = \setof{ K \cap R \mid K\in\conn{p}, K\cap R\neq \varnothing }.
\]

Suppose that $N:k\to l$ is a net with boundaries. An \emph{oriented partition} is $\fat{P} = (P_l, P_r)$, where $\{P_l,P_r\}$ is a partition of $\places{N}$ and $P_l,P_r\neq\varnothing$. Given an oriented partition, we define the \emph{extended ports} of $P_l$ and $P_r$: $\eports{P_l} \Defeq
\ports{P_l} \cup \setof{ (i, 0) \mid i < k }$ and $\eports{P_r} \Defeq
\ports{P_r} \cup \setof{ (i, 1) \mid i < l }$. These contain the ports of the
places in each set and boundary ports: $P_l$ from the left boundary and $P_r$ from the right boundary.

Given an oriented partition $\fat{P}$ of a net $N$, we need to express how the places in the two disjoint place sets are interconnected. We will refer to sets of connections as \emph{networks}. Then the network from $P_l$ to $P_r$ consists of the connections to extended ports of $P_r$, for each extended port of $P_l$:
\[
\network{P_r}{P_l} \Defeq
\left\lbrace
\connr{\eports{P_r}}{p} \mid p \in \eports{P_l}
\right\rbrace
\]
and similarly for $\network{P_l}{P_r}$.




\begin{example}
Consider the clique $C_4:0\to 0$, illustrated in
\figref{fig:clique4}, and the oriented partition $\fat{P} = (\setof{0,1}, \setof{2, 3})$. Then:
\begin{multline*}
\connr{\setof{2,3}}{\outport{0}} = \footprint{\portset{\inport{2}}, \portset{\inport{3}}}
    = \connr{\setof{2,3}}{\outport{1}}, \\
\connr{\setof{2,3}}{\inport{0}} = \footprint{\portset{\outport{2}}, \portset{\outport{3}}}
    = \connr{\setof{2,3}}{\inport{1}}.
\end{multline*}
Thus $\network{\setof{2,3}}{\setof{0,1}} =
    \setof{ \footprint{\portset{\outport{2}}, \portset{\outport{3}}},
    \footprint{\portset{\inport{2}}, \portset{\inport{3}}} }$
    and by a symmetric argument $\network{\setof{0,1}}{\setof{2,3}} =
    \setof{ \footprint{\portset{\outport{0}}, \portset{\outport{1}}},
    \footprint{\portset{\inport{0}}, \portset{\inport{1}}} }$. Note that, although cliques contain many transitions, the networks between partitions are small: in fact, it is not difficult to show that for all $n$, any oriented partition $(P_l,P_r)$ of the places of $C_n$ satisfies $|\network{P_l}{P_r}|=|\network{P_r}{P_l}|=2$.
    Roughly speaking, the amount of information to describe connections from one partition to another is constant, and this is the key insight that leads to the decompositions presented in Examples~\ref{exm:clique_decomp} and~\ref{exm:subset_decomp}.
%
\end{example}

\subsection{Bases, dimension and pure decompositions}

We now show that there is a general connection between the networks of an
oriented partition, and the internal boundary of any corresponding `$\comp$'
decompositions. First we introduce the notion of a \emph{basis} of a network:
\begin{definition}[Basis]
    Given a network $N$, a vector of connections $b_0 \dots b_{n - 1}$, is
    a \emph{basis} of $N$ iff $\forall c \in N$, there exists
    $l\subseteq\ordinal{n}$ with $c = \bigcup_{i \in l} b_i$. That is, every
    connection in $N$ can be written as the union of a subset of the
    connections of the basis. The dimension of a network $N$, $\dimension{N}$
    is the size of its smallest basis.
\end{definition}

%

\begin{figure}[h]
    \centering
    \begin{subfigure}{.49\textwidth}
        \includegraphics[height=2cm]{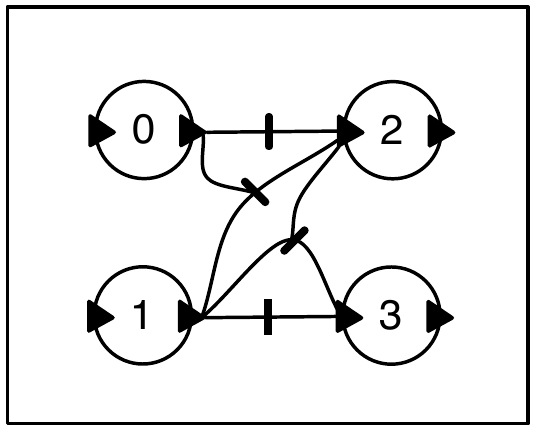}
        \includegraphics[height=2cm]{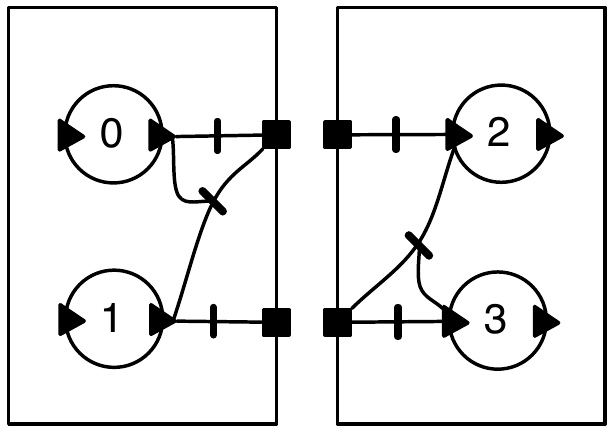}
    \caption{A pure composition.\label{fig:pure}}
    \end{subfigure}
    \begin{subfigure}{.49\textwidth}
	\includegraphics[height=2cm]{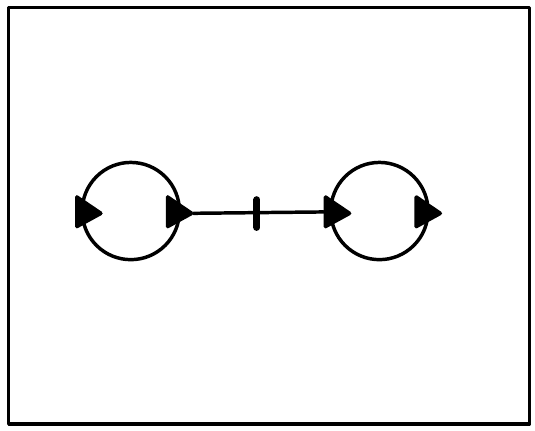}
	\includegraphics[height=2cm]{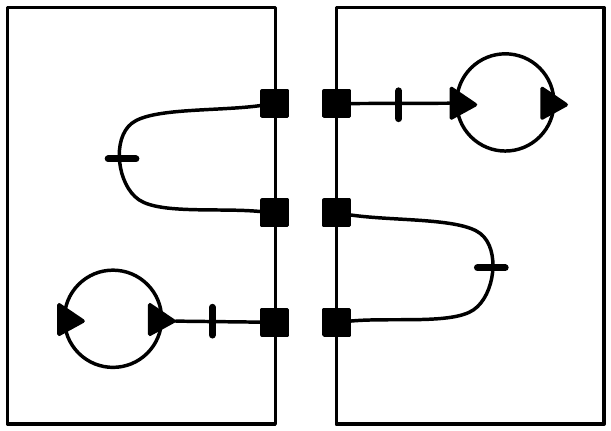}
	\caption{A non-pure composition.\label{fig:nonpure}}
	\end{subfigure}
	\caption{Composition examples.}
\end{figure}

Suppose we have a net $N:k\to l$ and a decomposition $N\cong N_l\comp N_r$ (*)
where $N_l:k\to n, N_r:n\to l$, with places $P_l$ and $P_r$, respectively.
Through slight abuse of notation we equate the places of $P_l$ and $P_r$ with
the corresponding places in $N$ by fixing a concrete isomorphism that witnesses
(*).  In particular we obtain an oriented partition $(P_l,P_r)$ of $N$.


The connections of each shared-boundary port $j<n$ to $N_l$ and $N_r$ are just
$\bconn{N_l}{j} \Defeq \connwithnet{N_l}{\rightb{j}}$ and $\bconn{N_r}{j}
\Defeq \connwithnet{N_r}{\leftb{j}}$ where the superscripts refer to the
ambient net in which the calculation takes place. We say that the composition
$N_l\comp N_r$ is \emph{pure} iff, for all $j<n$ no portset in $\bconn{N_l}{j}$
contains a right boundary port $\rightb{i}$ and, symmetrically,
no portset in $\bconn{N_r}{j}$ contains a left boundary port $\leftb{i}$.  In
other words, no transition in $N_l$ or $N_r$ connects to two different shared
boundary ports. It follows that in pure decompositions $\connr{N_l}{j}$ and
$\connr{N_r}{j}$ are connections in $N$. All examples of decompositions we have
considered so far are pure; a non-pure decomposition is illustrated
in~\figref{fig:nonpure}.


\begin{example}
Consider the net in \figref{fig:pure} and the corresponding pure
decomposition. The shared-boundary connections are as follows:
\begin{align*}
    \bconn{N_1}{0} &=
        \footprint{\portset{\outport{0}}, \portset{\outport{0}, \outport{1}}}
        &
    \bconn{N_2}{0} &=
        \footprint{\portset{\inport{2}}} \\
    \bconn{N_1}{1} &=
        \footprint{\portset{\outport{1}}}
        &
    \bconn{N_2}{1} &=
        \footprint{\portset{\inport{2}, \inport{3}}, \portset{\inport{3}}}
\end{align*}
\end{example}

\begin{proposition}\label{pro:decomposition_basis}
    \label{lem:basis_from_boundary_footprints}
    Given a net $N:k\to l$ together with a pure decomposition $N_l:k\to n,
    N_r:n\to l$, the vector $\left(\bconn{N_r}{i}\right)_{i<n}$ is a basis for
    $\network{N_r}{N_l}$, and $(\bconn{N_l}{i})_{i<n}$ is a basis for
    $\network{N_l}{N_r}$.
\end{proposition}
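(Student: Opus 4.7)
The plan is to prove the first assertion; the second follows by the evident symmetric argument with the roles of $N_l$ and $N_r$ interchanged. The key observation is that purity forces the non-trivial minimal synchronisations of $N \cong N_l \comp N_r$ to take an especially simple form, from which the basis structure falls out directly.

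First I would classify the non-trivial minimal synchronisations $(U, V)$ according to $J \Defeq \target{U} = \source{V} \subseteq \ordinal{n}$. If $J = \varnothing$ then every $u \in U$ has $\target{u} = \varnothing$ and every $v \in V$ has $\source{v} = \varnothing$, so any proper non-empty subset of $(U, V)$ is already a non-trivial synchronisation; minimality therefore forces $|U| + |V| = 1$, giving the two ``internal'' types $(\{u\}, \varnothing)$ and $(\varnothing, \{v\})$. If $J \neq \varnothing$ then $V \neq \varnothing$ and, by purity of $N_l$, each $\target{u}$ is either empty or a singleton; minimality rules out the empty case (dropping such a $u$ would leave a strictly smaller non-trivial synchronisation) and contention clause (iv) prevents two distinct elements of $U$ from hitting the same $j \in J$, so $U = \{u_j\}_{j \in J}$ is in bijection with $J$, and symmetrically $V = \{v_j\}_{j \in J}$. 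But then for $|J| \geq 2$ and any $j_0 \in J$, the pair $(\{u_{j_0}\}, \{v_{j_0}\})$ is itself a strictly smaller non-trivial synchronisation, contradicting minimality. Hence the only ``crossing'' minimal synchronisations are pairs $(\{u\}, \{v\})$ with $\target{u} = \{j\} = \source{v}$ for a single $j < n$.

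Next I would evaluate $\connr{\eports{N_r}}{p}$ for an arbitrary $p \in \eports{N_l}$ using this classification. The two internal types contribute nothing: $(\{u\}, \varnothing)$ has no port in $\eports{N_r}$, while $(\varnothing, \{v\})$ has no port in $\eports{N_l}$ and so cannot contain $p$. Only crossing pairs $t = (\{u\}, \{v\})$ contribute, and for such $t$ one has $p \in \ports{t}$ iff $p \in \ports{u}$, whereas $\ports{t} \cap \eports{N_r}$ equals $\ports{v} \setminus \{\leftb{j}\}$ computed in $N_r$, which is by definition a portset in $\bconn{N_r}{j}$.

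Finally, setting $J_p \Defeq \{\, j < n \mid \exists u \in \trans{N_l}.\; p \in \ports{u} \wedge j \in \target{u} \,\}$, the desired identity reads off at once: as $j$ ranges over $J_p$ and $v$ ranges over transitions of $N_r$ with $j \in \source{v}$, one obtains precisely the portsets of $\bigcup_{j \in J_p} \bconn{N_r}{j}$, whence $\connr{\eports{N_r}}{p} = \bigcup_{j \in J_p} \bconn{N_r}{j}$. Thus every connection in $\network{N_r}{N_l}$ is a union of a subfamily of $(\bconn{N_r}{i})_{i < n}$, which is the required basis property, and the dual statement for $\network{N_l}{N_r}$ follows by swapping left and right throughout. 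The principal obstacle is the first step: one must simultaneously invoke minimality, mutual independence (through contention clause (iv)), and purity in order to rule out synchronisations with $|J| \geq 2$ as well as extraneous transitions sitting alongside a crossing pair. Once this classification is in place the remainder of the argument is essentially bookkeeping.
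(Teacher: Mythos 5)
Your proof is correct and follows essentially the same route as the paper's: reduce, via purity, the non-trivial minimal synchronisations to singleton pairs crossing a single shared port (plus irrelevant internal ones), then read off each connection in the network as a union of the boundary-port connections. The only difference is one of detail --- the paper simply asserts the classification of minimal synchronisations as a consequence of purity, whereas you prove it carefully using minimality and contention clause (iv); the subsequent computation is the same.
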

\begin{proof}
The purity of the composition implies that all transitions in the composition
(minimal synchronisations) are of the form $(\{u\},\{v\})$, $u\in\trans{N_1}$,
$v\in\trans{N_2}$, where $\target{u}=\source{v}$, a single shared-boundary
port. Then, it follows that for each $p\in \eports{P_l}$:
\begin{align*}
& \connr{\eports{P_r}}{p}= \\
& \setof{ \ports{t}\cap\eports{P_r} \mid t\in \trans{N}, p\in \ports{t}, \ports{t}\cap\eports{P_r}\neq \varnothing} \\
=& \setof{ \ports{v} \mid v\in \trans{N_r}, \exists u\in \trans{N_l}.\, p\in \ports{u} \wedge \target{u}=\source{v} } \\
 =& \bigcup_{\setof{i \;\mid\; \exists u\in N_l.\, p\in\ports{u}, \target{p}=i}} \bconn{N_r}{i}.
\end{align*}
The second case follows by symmetry.\qed
\end{proof}

Proposition~\ref{pro:decomposition_basis} leads to the following immediate corollary.
\begin{corollary}\label{cor:no_basis_no_decomp}
Suppose $N:k\to l$ decomposes into $N_1;N_2$ where $N_1:k\to n$, $N_2:n\to l$.
Suppose that $\fat{P} = (P_1, P_2)$ is the corresponding oriented partition.
Then $n\geq \dimension{\network{P_2}{P_1}}$.\qed
\end{corollary}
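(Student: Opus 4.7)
The plan is to derive the inequality as a one-line consequence of Proposition~\ref{pro:decomposition_basis}, with a small amount of care taken about the hypothesis of purity. Reading the statement in the context of this section, I interpret the decomposition $N \cong N_1 \comp N_2$ as being a pure composition, since the proposition on which the corollary builds is formulated only for pure compositions; the plan would be to either state this assumption explicitly in the proof, or to note that any non-pure decomposition can be refined into a pure one without reducing the shared boundary size (so that the lower bound carries over to the general case).

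Assuming purity, the core of the argument is immediate. Proposition~\ref{pro:decomposition_basis} supplies us directly with a basis of $\network{P_2}{P_1}$, namely the length-$n$ vector $\bigl(\bconn{N_2}{i}\bigr)_{i<n}$. By the definition of dimension as the size of a smallest basis, the existence of any basis of size $n$ yields $\dimension{\network{P_2}{P_1}} \leq n$, which is exactly the claimed inequality. No further calculation is required.

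The only potential obstacle is justifying the restriction to pure decompositions, should the reader object that the corollary is stated without that qualifier. The informal justification I would give is that a non-pure composition essentially carries strictly more information across the shared boundary (transitions spanning multiple shared ports), so a purification of the composition cannot decrease $n$; hence the inequality established in the pure case transfers to the general case. Spelling this argument out in detail is the only nontrivial part, but it is a standard manipulation of the net-with-boundaries algebra and does not affect the short deduction above. I would therefore present the proof as a single-sentence application of Proposition~\ref{pro:decomposition_basis}, leaving the purity observation as a remark.
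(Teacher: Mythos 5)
Your proof is correct and matches the paper's (implicit) argument exactly: the corollary is stated as an immediate consequence of Proposition~\ref{pro:decomposition_basis}, since that proposition exhibits a basis of size $n$ and the dimension is by definition the size of a smallest basis. Your observation that the purity hypothesis should be carried over (or discharged) is a reasonable point of care, as the paper's corollary omits the qualifier even though the proposition requires it.
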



\begin{example}
Consider again the net in~\figref{fig:pure}. We have
\[
\network{\setof{2,3}}{\setof{0,1}}
= \setof{ \footprint{ \portset{ \inport{2} } }, \footprint{\portset{\inport{3}},\portset{\inport{2},\inport{3}}} }
\]
It is not difficult to see that a basis of size 1 does not exist, so there is no pure decomposition into nets with places $\setof{0,1}$, $\setof{2,3}$ with size 1 boundary.
\end{example}

Returning to the family of grid nets $G_n$ of~\figref{fig:grid}, for any $k\in
\N_+$, $G_k$ has a pure decomposition of width $k$; we illustrate this for
$G_3$ in~\figref{fig:G3}, and it is not difficult to generalise the
construction to arbitrary $k$. We omit the details here. We believe that
decompositions of size $<k$ do not exist: essentially if one constructs a grid
incrementally with pieces of size $<k$ one reaches a composition with boundary
$>k$, using an argument similar to the statement of
Corollary~\ref{cor:no_basis_no_decomp}.
\begin{example}
Consider $G_3$ in~\figref{fig:G3}. We can show that there is no pure
`$\comp$'-decomposition of width $< 3$. Clearly we can asssume that leaves each
have fewer than $2$ places. Using the conclusion of
Corollary~\ref{cor:no_basis_no_decomp}, we can show (by inspection) that for
every ``increasing'' sequence of partitions of the places of $G_3$,
$(P_{l,1},P_{r,1}), (P_{l,2},P_{r,2}), \dots, (P_{l,k},P_{r,k})$, where
$|P_{l,1}|,|P_{r,k}|<3$ and for all $1\leq i \leq k-1$, $P_{l,i}\subseteq
P_{l,i+1}$ and $|P_{l,i+1}\backslash P_{l,i}| < 3$, there exists $i$ such that
any composition $N_{l_i}:0\to n$, $N_{r_i}:n\to 0$ implies $n\geq 3$. We omit
the tedious details. It is also not difficult to extend this argument to
arbitrary pure decompositions (ie those that also have $`\otimes$' nodes).
\end{example}
The theory of general grid partitioning is
non-trivial (see, e.g.~\cite{Donaldson2000} for a pleasant overview) and we
leave the study of this conjecture for future work.
\begin{conjecture}
The family $\{G_n\}_{n\in\N_+}$ of~\figref{fig:grid} does not have bounded
decomposition width.
\end{conjecture}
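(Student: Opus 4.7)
The plan is to combine a centroid argument on the decomposition tree with a generalisation of Corollary~\ref{cor:no_basis_no_decomp} and a standard isoperimetric lower bound for grid partitions.

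First, I would prove a centroid lemma: any wiring decomposition of $G_n$ contains an internal node of its syntax tree whose subtree produces a subnet whose place set $P \subseteq \places{G_n}$ satisfies $\tfrac{n^2}{3} \leq |P| \leq \tfrac{2n^2}{3}$. This is the standard weight-balancing argument on a binary tree whose leaves carry weights summing to $n^2$. The operation at that node, whether `$\comp$' or `$\otimes$', induces an oriented partition $(P, \overline{P})$ of $\places{G_n}$ with both sides of size $\Theta(n^2)$.

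Second, I would extend Proposition~\ref{pro:decomposition_basis} to the full algebra. For a `$\otimes$'-node, no transition of $G_n$ can synchronise across the tensor boundary at that node, so every grid transition straddling $(P,\overline{P})$ must be synchronised at some strictly higher `$\comp$'-ancestor of the syntax tree; descending to the nearest such ancestor reduces to the `$\comp$' case. For non-pure `$\comp$'-compositions along a shared boundary of width $n$, a single transition of $N_l$ may touch several shared-boundary ports simultaneously, but every transition of the composite remains a minimal synchronisation whose shared-boundary footprint is a subset of $\ordinal{n}$. A coarse upper bound of $2^n$ on the number of realisable footprints yields the inequality $\dimension{\network{\overline{P}}{P}} \leq 2^n$, and hence the shared-boundary lower bound $n \geq \log_2 \dimension{\network{\overline{P}}{P}}$.

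Third, I would invoke a grid-isoperimetric bound: any partition of the $n \times n$ grid with both sides of size $\Theta(n^2)$ has edge boundary $\Omega(n)$ (cf.~\cite{Donaldson2000}). Since every grid transition lies between a unique pair of adjacent places, the $\Omega(n)$ crossing transitions contribute $\Omega(n)$ distinct portsets to the connections $\connr{\eports{\overline{P}}}{p}$ of their $P$-endpoints, and a standard independence argument along rows and columns of the grid then gives $\dimension{\network{\overline{P}}{P}} = \Omega(n)$. Combined with Step~2 this yields a shared-boundary size of $\Omega(\log n)$, which diverges with $n$ and is enough to establish the conjecture.

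The main obstacle will be Step~2 in the non-pure case. A transition of $N_l$ touching $m$ shared-boundary ports can, via synchronisations with transitions of $N_r$ touching overlapping port subsets, compactly encode many connections in $\network{\overline{P}}{P}$ at once. Pinning down the precise trade-off between shared-boundary size and the dimension of the induced network—and ideally strengthening the logarithmic bound of Step~2 to a linear one, matching the width $k$ achieved in~\figref{fig:G3}—will require combinatorial arguments beyond the pure case treated here, and is the crux of a full proof of the conjecture.
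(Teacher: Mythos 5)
The statement you are addressing is stated in the paper as a \emph{conjecture}: the authors give no proof, offer only an informal argument for pure `$\comp$'-decompositions of $G_3$ (explicitly omitting ``the tedious details''), and leave the general question to future work. So there is no proof in the paper to compare against, and the question is whether your sketch closes the gap. It does not, and you essentially say so yourself: the crux you defer --- extending the lower bound of Corollary~\ref{cor:no_basis_no_decomp} beyond pure `$\comp$'-compositions --- is precisely the part that is missing, and the specific bound you propose there does not hold up as stated. Your claim that $\dimension{\network{\overline{P}}{P}} \leq 2^n$ for a non-pure composition with shared boundary $n$ rests on grouping composite transitions by their shared-boundary footprint $\beta \subseteq \ordinal{n}$ and treating each group as a basis element. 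But a basis element must be a connection such that every connection of the network is a \emph{union} of basis elements, and the minimality condition on synchronisations $(U,V)$ couples the two halves: for a fixed footprint $\beta$, which right-hand sets $V$ actually occur in the connection of a given port $p \in \eports{P_l}$ depends on which $U$'s containing $p$ realise $\beta$, so the connection of $p$ need not contain \emph{all} portsets associated with $\beta$ and hence need not be a union of your candidate basis elements. Without a correct trade-off between shared-boundary size and network dimension in the non-pure case, Step~2 fails and the argument gives no bound at all for general decompositions.

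Two further points need repair even granting Step~2. First, the reduction of a `$\otimes$'-centroid to ``the nearest `$\comp$'-ancestor'' does not preserve the balanced partition: the ancestor partitions the places according to \emph{its} two children, which is a different and possibly very unbalanced split, so the isoperimetric input of Step~3 is lost; you would need to argue about a whole root-to-centroid path of nodes, not a single ancestor. Second, be careful with the inference from ``$\Omega(n)$ crossing transitions'' to ``dimension $\Omega(n)$'': the clique example in the paper has $\Theta(n^2)$ crossing transitions but network dimension $2$, so transition counts alone prove nothing. For grids the argument can be saved because distinct boundary places of $P$ have connections containing portsets naming \emph{distinct} places of $\overline{P}$, and a singleton connection $\footprint{\portset{\inport{q}}}$ forces a dedicated basis element; but this is the ``independence argument'' you wave at, and it must be made explicit since it is where grids genuinely differ from cliques. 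In summary: the skeleton (balanced separator, dimension lower bound, isoperimetry) is a sensible plan and consistent with the direction the paper hints at, but the conjecture remains open on your account, with the non-pure/tensor extension of Proposition~\ref{pro:decomposition_basis} as the unproven core.
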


\section{Conclusions and future work}
\label{sec:discussion}

We have considered the decomposition of 1-bounded Petri nets, employing the
full algebra of nets with boundaries. Through several examples we have
demonstrated that by doing so, we extend the applicability of our
divide-and-conquer algorithm for reachability checking. We have introduced and
examined the structural property of decomposition width for nets, and more
generally, directed hypergraphs.  Finally, we have developed the theory of
wiring decompositions to give a lower bound on the boundary size of certain
compositions.

Low decomposition width is not sufficient for avoiding state
explosion when generating the transition systems from nets---this, instead, is
the `semantic' property referred to in the Introduction. In future work, we
will consider this property, aiming to characterise the class of nets on which
our technique for reachability checking is viable. Here we have concentrated on
the necessary structural condition of (low) decomposition width, which also
deserves further study in its own right, and how it relates to other structural properties of hypergraphs.

\enlargethispage{\baselineskip}

%

{
\bibliography{jab}
\bibliographystyle{abbrv}
}
\end{document}